\newtheorem{theorem}{Theorem}
\newtheorem{corollary}{Corollary}[theorem]
\begin{document}


\title{The tripartite quantum-memory-assisted entropic uncertainty relation and upper bound on shareability of quantum discord}

\author{Hazhir Dolatkhah}
  \email{h.dolatkhah@gmail.com}
  \affiliation{RCQI, Institute of physics, Slovak Academy of Sciences, \\D\'{u}bravsk\'{a} Cesta 9, 84511 Bratislava, Slovakia}
  \affiliation{Department of Physics, University of Kurdistan, P.O.Box  66177-15175, Sanandaj, Iran}

\author{Abolhassan Mohammadi}
\affiliation{Department of Physics, University of Kurdistan, P.O.Box  66177-15175, Sanandaj, Iran}%
\author{Soroush Haseli}
\affiliation{Faculty of Physics, Urmia University of Technology, Urmia, Iran.}


\date{\today}

\begin{abstract}
Quantum discord and quantum uncertainty are two important features of the quantum world. In this work, the relation between entropic uncertainty relation and the shareability of quantum discord is studied. First, by using tripartite quantum-memory-assisted entropic uncertainty relation, an upper bound for the shareability of quantum discord among different partites of a composite system is obtained. It is also shown that, for a specific class of tripartite states, the obtained relation could be expressed as monogamy of quantum discord. Moreover, it is illustrated that the relation could be generalized and an upper bound for the shareability of quantum discord for multipartite states is derived.
\end{abstract}

\keywords{Quantum discord; Entropy uncertainty relation (EUR); shareability of quantum discord.}
\maketitle


\section{\label{sec:intro}Introduction}

The uncertainty principle plays a crucial role in the field of quantum mechanics and it is known to be one of the fundamental concepts of the quantum world \cite{Heisenberg}. In quantum information theory, the uncertainty principle could be expressed in terms of the Shannon entropy. The entropy was used by Deutsch, as a criterion of uncertainty, which led to the formulation of the most famous form of the entropic uncertainty relation (EUR) \cite{Deutsch}. The Deutsch's uncertainty bound was modified by Kraus \cite{Kraus}, and a year later, it was proved by Maassen and Uffink \cite{Uffink}. The relation states that for two incompatible observables $X$ and $Z$, the following EUR will hold
\begin{equation}\label{eur}
    H(X) + H(Z) \geq \log_2{1 \over c} \equiv q_{MU} ,
\end{equation}
in which $H(Q) = - \Sigma_k p_k \log_2p_k$ is the Shannon entropy of the measurable $Q \in \{X,Z\}$, $p_k$ stands for the probability of the outcome $k$, and the parameter $c$ is defined as $c = max_{\{\mathbb{X},\mathbb{Z}\}} |\langle x_i| z_j \rangle|^2$, where $\mathbb{X}=\{ |x_i\rangle \}$ and $\mathbb{Z}=\{ |z_j\rangle \}$ are the eigenstates of the observables $X$ and $Z$, respectively. Also, $q_{MU}$ is addressed as the incompatibility measure.  \\

Expanding and modifying the relation is one of the main purposes in the field of quantum information which is being pursued by many researchers \cite{Berta,Renes,Coles1,Bialynicki,Wehner,Pati,Ballester,Vi,Wu,Rudnicki,Pramanik,Maccone,Coles,Adabi,Adabi1,Dolat2,Dolatkhah,Haseli2,Zozor,R,Kamil,Rudnicki1,Pramanik1}.
In \cite{Berta}, it was found that using the memory particle, the entropic uncertainty could be decreased. It resulted in a new uncertainty relation known as bipartite quantum-memory-assisted entropic uncertainty relation (QMA-EUR). The relation is read as
\begin{equation}\label{qmaeur}
    S(X|B) + S(Z|B) \geq q_{MU} + S(A|B),
\end{equation}
in which $S(A|B)$ is the conditional von-Neumann entropy of $\rho_{AB}$, and $S(\mathcal{O}|B) = S(\rho_{\mathcal{O}B}) - S(\rho_B), \mathcal{O} \in \{X,Z\}$ are the conditional von-Neumann entropies of the post-measurement states after measuring $X$ and $Z$ on the part $A$,
\begin{eqnarray}
\rho_{XB} & = & \sum_i \left( |x_i \rangle \langle x_i |_A \otimes I_B \right) \rho_{AB}
               \left( |x_i \rangle \langle x_i |_A \otimes I_B \right), \\
\rho_{ZB} & = & \sum_j \left( |z_j \rangle \langle z_j |_A \otimes I_B \right) \rho_{AB}
               \left( |z_j \rangle \langle z_j |_A \otimes I_B \right).
\end{eqnarray}

The bipartite QMA-EUR could be extended to the tripartite QMA-EUR \cite{Renes,Berta}, where the quantum memories are played by two extra particles $B$ and $C$. In tripartite QMA-EUR, a quantum state $\rho_{ABC}$ is shared by Alice, Bob, and Charlie, so that Alice, Bob and Charlie have access to parts $A$, $B$, and $C$, respectively. Then, Alice carries the measurement $X$ or $Z$ on her quantum system. Suppose that Alice measures $X$. Then, it is Bob's job to minimize his uncertainty about $X$. On the other hand, if Alice measures $Z$, then it would be Charlie's task to minimize his uncertainty about $Z$. The tripartite QMA-EUR is given by \cite{Renes,Berta},
\begin{equation}\label{triQMA}
S(X|B) + S(Z|C) \geq q_{MU}.
\end{equation}

Some effort has been put into modifying and improving the bound presented in Eq.\eqref{triQMA} \cite{Ming,Dolat}. In \cite{Dolat}, the lower bound of the tripartite QMA-EUR is improved by adding two additional terms to the lower bound of the relation as
\begin{equation}\label{Hajir1}
S(X|B) + S(Z|C) \geq q_{MU} + {S(A|B) + S(A|C) \over 2} + {\rm max}\{0,\delta \},
\end{equation}
where 
$$\delta = {1 \over 2} [I(A:B)+I(A:C)] - [I(X:B)+I(Z:C)],$$ in which 
$I(A:B)$ and $I(P:B)$ respectively are mutual information and Holevo quantity and are given by
\begin{eqnarray}
I(A:B) & = &  S(\rho_A) + S(\rho_B) - S(\rho_{AB}),  \\
I(P:B) & = &  S(\rho_B) - \sum_i p_i S(\rho_{B|i}).
\end{eqnarray}
Note that, as the observable $P$ on the part $A$ is measured by Alice, the $i$-th outcome is obtained with probability $p_{i} = Tr_{AB} (\Pi_{i}^{A}\rho_{AB} \Pi_{i}^{A})$ and the part $B$ is left in the corresponding state
$S(\rho_{B|i}) =\frac{Tr_{A} (\Pi_{i}^{A}\rho_{AB} \Pi_{i}^{A})}{p_{i}}$. Recently, it is shown that this lower bound, Eq.\;\eqref{Hajir1}, is tighter than the bounds that have been introduced so far \cite{Dolat,sr2021,hadwang}.\\

The topics we discussed so far were EURs and QMA-EURs with only two observables. However, QMA-EURs could be generalized to more than two observables. It has been the subject of many researches and up to now, many QMA-EURs for more than two observables have been introduced \cite{Dolatkhah,Liu,Zhang,Yunlong,multi2,multi3,multi4,multi5,multi6,multi7}. For instance, new QMA-EUR for multipartite systems has been proposed in \cite{hadsci}, where the memory is divided into multiple parts, as follows

 \begin{equation}\label{Haddadi}
    \sum_{m=1}^{N} S(M_m|X_m) \geq - \log_{2}(b) + {N-1 \over N} \sum_{m=1}^{N} S(A|X_m)
    +{\rm max}\{0,\delta^{N} \} ,
\end{equation}
in which
\begin{equation}
b=\max_{i_{N}} \left\{ \sum_{i_{2}\sim{i_{N-1}}} \max_{i_{1}} \Big[ |\langle u^{1}_{i_{1}}|u^{2}_{i_{2}}\rangle |^{2} \Big]\prod^{N-1}_{m=2}|\langle u^{m}_{i_{m}}|u^{m+1}_{i_{m+1}}\rangle |^{2} \right\} ,
\end{equation}
where $|u^{m}_{i_{m}}\rangle$ is the $i$-th eigenvector of $M_{m}$, and  $\delta^{N}={N-1 \over N} \sum_{m=1}^{N} I(A:X_m) - \sum_{m=1}^{N} I(M_m:X_m).$ $M_m$ indicates the different incompatible observables and $X_m$ stands for the memory particles for $m$-th measurement. In this uncertainty game, a multipartite quantum state $\rho_{AX_{1}...X_{N}}$ is shared by Alice and the others. Now, Alice measures one of the observables $M_m (m=1,2,...,N)$ on her quantum system. As Alice measures the observable $M_m$, the $X_m$'s task will be minimization his uncertainty about $M_{m}$. \\

 The QMA-EUR has been realized to have potential applications in various quantum information processing tasks, such as   quantum key distribution \cite{Berta,Koashi}, quantum metrology \cite{Giovannetti}, quantum cryptography \cite{Dupuis,Koenig}, quantum randomness \cite{Vallone,Cao}, entanglement witness \cite{Berta2,Huang}, EPR steering \cite{Walborn,Schneeloch},  and so on.\\
Moreover, many authors attempted to establish the relationship between quantum correlations and entropic uncertainty relations \cite{r39,r40,r41,r46,r47,r48,r50,r53,r60,r65,r66,r69,r70,n2,n4,rssm,rssm2,rssm3,new02,new04,
new05,new06,new07,new01,wangr}. Recently, Hu and Fan could obtained a new upper bound for quantum discord (QD) using bipartite QMA-EUR \cite{r39}. Also, they extracted an upper bound for the shareability of QD.    \\
In this paper, inspiring from \cite{r39} and by using tripartite QMA-EUR, an upper bound shareability of QD will be found. In the beginning, new relations for tripartite QMA-EUR are introduced. Then, it is shown that by using these relations, one could obtain a new upper bound for the shareability of QD. Also, it is shown that for specific states, the obtained relation could be considered as monogamy of QD. Finally, it is exhibited that the above procedure could be generalized to a multipartite system, in which an upper bound for the shareability of QD in a multipartite system is derived. \\

The paper has been organized as follows: In Sec.II, the QD will be defined as one of the measures of quantum correlation. In Sec.III, the new relation for the tripartite QMA-EUR is expressed and also an upper bound for the shareability of QD is extracted. The results will be summarized in Sec.IV.


\section{Quantum Discrod}
Another important concept in the field of quantum information is QD. QD and also its possible connections with other aspects of quantum information and beyond, including quantum communication, quantum computation, many-body physics, and open quantum dynamics have received huge attention (refer to \cite{Bera_QD} for more detail). \\
The concept of QD of a bipartite quantum system is defined in several ways which could be classified into two wide categories. One of these categories is based on measurement in any one of the subsystems which will be used in our discussion. \\ 
QD is the difference between the total and the classical correlations \cite{Ollivier,Henderson}, namely,
\begin{equation}
D_A(\rho_{AB})=I(\rho_{AB})-J_A(\rho_{AB}),
\end{equation}
in which the subscript of $D_A(\rho_{AB})$ denotes that the measurement has been performed on the subsystem $A$. The total correlations in state $\rho_{AB}$ measured by the quantum mutual information is defined by
\begin{equation}
I(\rho_{AB}) = S(\rho_A) + S(\rho_B) - S(\rho_{AB}),
\end{equation}
and the classical correlation $J_A(\rho_{AB})$, given by
\begin{equation}
J_A(\rho_{AB}) = S(\rho_B)-min_{\Pi_i^A} S(\rho_{B|{\Pi_i^A}}),
\end{equation}
where $S(\rho_{B|{\Pi_i^A}})=\sum_i p_i S(\rho_{B|i})$ and the minimization is taken over all quantum measurements, ${\Pi_i^A }$, performed on the system $A$. \\

Recently, Hu and Fan have investigated a relation between QD and bipartite QMA-EUR \cite{r39}. Their consideration led to an improvement on the upper bounds for QD \cite{r39}. They also considered the effects of the bipartite QMA-EUR on the shearability of quantum correlation among different subsystems. \\
With the use of the bipartite QMA-EUR, Hu and Fan found an upper bound on the shearability of QD among different parties of a composite system, which is given by \cite{r39}
\begin{equation}
D_A(\rho_{AB}) + D_A(\rho_{AC}) \leq S(\rho_A) + \delta_T ,
\end{equation}
where $\delta_T = S(Q|B) + S(R|B) - \log_2{1 \over c} - S(A|B)$. They showed that for any tripartite state $\rho_{ABC}$ with $S(\rho_A) = -S(A|BC)$, the above relation can be written as:
\begin{equation}\label{Hu11}
D_A(\rho_{AB}) + D_A(\rho_{AC}) \leq D_A(\rho_{A:BC}) + \delta_T .
\end{equation}
As mentioned in \cite{r39}, one can address Eq.\;\eqref{Hu11} as the released version of the monogamy relation of QD, applicable for all tripartite pure states and to extended classes of mixed states.

\section{tripartite QMA-EUR and shareability of QD}
In this section, inspiring by Hu and Fan \cite{r39}, who obtained an upper bound on the shareability of QD among the constituent parties by using bipartite QMA-EUR, we are going to introduce a new upper bound on the shareability of QD by utilizing tripartite QMA-EUR. For this issue, first, let us introduce new tripartite QMA-EURs.
\begin{theorem}\label{theorem1}
For any tripartite state, the following equations hold
\begin{eqnarray}
S(X|B) + S(Z|C) & \geq & q_{MU} + {1 \over 2} \; \Big[ S(A|B) + S(A|C) \Big] + {\rm max} \{ O, \delta'^{3} \}, \label{theorem1_1}\\
S(X|B) + S(Z|C) & \geq & q_{MU} + {1 \over 2} \; \Big[ S(A|B) + S(A|C) \Big] + {\rm max} \{ O, \delta''^{3} \}, \label{theorem1_2}
\end{eqnarray}
where the quantities $\delta'^{3}$ and $\delta''^{3}$ are defined as
\begin{eqnarray}
\delta'^{3} & = & {1 \over 2} \; \Big\{ D_A(\rho_{AB}) + D_A(\rho_{AC}) - J_A(\rho_{AB}) - J_A(\rho_{AC}) \Big\}, \\
\delta''^{3} & = & \; \left\{ D_A(\rho_{AB}) + D_A(\rho_{AC}) - {1 \over 2} \big[ I(A:B) + I(A:C) \big] \right\}.
\end{eqnarray}

\end{theorem}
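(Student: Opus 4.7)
The plan is to derive Theorem~\ref{theorem1} from the already-established improved tripartite QMA-EUR in Eq.~\eqref{Hajir1}, which states
\begin{equation*}
S(X|B) + S(Z|C) \geq q_{MU} + \tfrac{1}{2}\bigl[S(A|B) + S(A|C)\bigr] + \max\{0,\delta\},
\end{equation*}
with $\delta = \tfrac{1}{2}[I(A:B)+I(A:C)] - [I(X:B)+I(Z:C)]$. My strategy is to rewrite $\delta$ using only the definition of quantum discord, and to bound the Holevo quantities $I(X:B)$ and $I(Z:C)$ from above by the corresponding classical correlations. Because $\max\{0,\cdot\}$ is monotone, any quantity $\delta_{\star}$ satisfying $\delta \geq \delta_{\star}$ automatically produces a valid lower bound of the same shape, which is exactly the form of both \eqref{theorem1_1} and \eqref{theorem1_2}.

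The key ingredient is the elementary measurement inequality $I(X:B) \leq J_A(\rho_{AB})$ and its counterpart $I(Z:C) \leq J_A(\rho_{AC})$. These hold because $J_A$ is defined as the classical correlation obtained by \emph{optimising} over all rank-one POVMs on the $A$-system, while $I(X:B)$ corresponds to just one particular choice of observable. Substituting these two inequalities into the formula for $\delta$, and then eliminating the mutual informations via the discord identity $I(A:B) = D_A(\rho_{AB})+J_A(\rho_{AB})$ (and likewise for the pair $AC$), should produce a clean cancellation that leaves
\begin{equation*}
\delta \;\geq\; \tfrac{1}{2}\bigl[D_A(\rho_{AB})+D_A(\rho_{AC})-J_A(\rho_{AB})-J_A(\rho_{AC})\bigr] \;=\; \delta'^{3},
\end{equation*}
which, via the monotonicity of $\max\{0,\cdot\}$, immediately yields \eqref{theorem1_1}.

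To reach \eqref{theorem1_2} I would perform the dual substitution, this time eliminating $J_A(\rho_{AB})$ and $J_A(\rho_{AC})$ in favour of $I(A{:}B)-D_A(\rho_{AB})$ and $I(A{:}C)-D_A(\rho_{AC})$ inside $\delta'^{3}$; a short rearrangement then identifies $\delta'^{3}$ algebraically with $\delta''^{3}$. Consequently the two bounds in Theorem~\ref{theorem1} are in fact two equivalent algebraic presentations of the same strengthening of \eqref{triQMA}, one that juxtaposes discord against classical correlation and one that juxtaposes discord against total correlation. I do not anticipate a genuine obstacle: once \eqref{Hajir1} and the Holevo-versus-classical-correlation inequality are in hand the entire argument is a chain of definition-level substitutions, and the only care needed is the bookkeeping of the factors $\tfrac{1}{2}$ and the signs when passing between $D_A$, $J_A$, and $I(A{:}\cdot)$.
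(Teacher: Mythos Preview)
Your proposal is correct and mirrors the paper's proof almost step for step: start from Eq.~\eqref{Hajir1}, bound the Holevo terms $I(X{:}B)$ and $I(Z{:}C)$ above by $J_A(\rho_{AB})$ and $J_A(\rho_{AC})$, and then rewrite via $I(A{:}B)=D_A(\rho_{AB})+J_A(\rho_{AB})$. Your extra remark that $\delta'^{3}$ and $\delta''^{3}$ are in fact algebraically equal is a clean observation the paper leaves implicit (it just says the second inequality follows ``by following the same procedure''), but otherwise the two arguments coincide.
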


\begin{proof}
The theorem is proved using the definition of classical correlation, QD, and tripartite QMA-EUR (Eq.\;\eqref{Hajir1}). Regarding Eq.\;\eqref{Hajir1}, one obtains
\begin{eqnarray}
S(X|B) + S(Z|C) & \geq & q_{MU} + {1 \over 2} \; \Big[ S(A|B) + S(A|C) \Big]
    + {1 \over 2} \big[ I(A:B) + I(A:C) \big] - I(X:B) - I(Z:B) \nonumber \\
   & \geq & q_{MU} + {1 \over 2} \; \Big[ S(A|B) + S(A|C) \Big]
    + {1 \over 2} \big[ I(A:B) + I(A:C) - 2J_A(\rho_{AB}) - 2J_A(\rho_{AC}) \big] \nonumber \\
                & = & q_{MU} + {1 \over 2} \; \Big[ S(A|B) + S(A|C) \Big]
    + {1 \over 2} \; \Big\{ D_A(\rho_{AB}) + D_A(\rho_{AC}) - J_A(\rho_{AB}) - J_A(\rho_{AC}) \Big\}.\nonumber \\
\end{eqnarray}
Note that in the second row of the above relation is due to the fact that $J_A(\rho_{AB}) \geq I(X|B)$ and also $J_A(\rho_{AC}) \geq I(X|C)$. In the last line of the above proof, the definition of QD has been used as well. \\
The other equation of the theorem is proved by following the same procedure.
\end{proof}

Here it should be mentioned that it is clear that Eq.\;\eqref{Hajir1} is tighter that Eqs.\;\eqref{theorem1_1} and \eqref{theorem1_2}. \\

\begin{corollary}
Regarding Eq.\;\eqref{Haddadi} and similar to the proof of Theorem.\;\ref{theorem1}, one can generalize the above equations to the multipartite system as

 \begin{equation}\label{Haddad}
    \sum_{i=1}^{N} S(M_i|X_i) \geq - \log(b) + {N-1 \over N} \sum_{i=1}^{N} S(A|X_i)
    +  {\rm max}\{0,\delta'^{N} \},
\end{equation}
where $\delta'^{N}=\sum_{i=1}^{N} D_A(\rho_{AX_i})-{1 \over N} \sum_{i=1}^{N} I(A:X_i).$
\end{corollary}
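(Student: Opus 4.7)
The plan is to mirror, almost line for line, the proof of Theorem \ref{theorem1}, but starting from the $N$-partite bound \eqref{Haddadi} instead of the tripartite bound \eqref{Hajir1}. The only conceptual ingredient beyond what already appears in Theorem \ref{theorem1} is the Holevo-type inequality $I(M_m:X_m)\le J_A(\rho_{AX_m})$ used once per index $m$ rather than twice in total; everything else is bookkeeping.

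First I would rewrite the $\delta^{N}$ appearing in \eqref{Haddadi}, namely
$\delta^{N}=\tfrac{N-1}{N}\sum_{m=1}^{N}I(A:X_m)-\sum_{m=1}^{N}I(M_m:X_m)$,
and bound it below using $I(M_m:X_m)\le J_A(\rho_{AX_m})$ term by term. This produces
\begin{equation*}
\delta^{N}\;\ge\;\tfrac{N-1}{N}\sum_{m=1}^{N}I(A:X_m)-\sum_{m=1}^{N}J_A(\rho_{AX_m}).
\end{equation*}
Next I would invoke the very definition of quantum discord, $J_A(\rho_{AX_m})=I(A:X_m)-D_A(\rho_{AX_m})$, substitute this into the right-hand side, and collect the $I(A:X_m)$ terms. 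The coefficient of $I(A:X_m)$ becomes $\tfrac{N-1}{N}-1=-\tfrac{1}{N}$, so the expression cleans up to $\sum_{m=1}^{N}D_A(\rho_{AX_m})-\tfrac{1}{N}\sum_{m=1}^{N}I(A:X_m)$, which is exactly $\delta'^{N}$.

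Finally, since $\max\{0,\delta^{N}\}\ge\max\{0,\delta'^{N}\}$ need not hold directly, the correct move is to note that \eqref{Haddadi} already guarantees the inequality with $\delta^{N}$ replaced by any lower bound on $\delta^{N}$ (because the bound $\sum S(M_m|X_m)\ge -\log b+\tfrac{N-1}{N}\sum S(A|X_m)+\delta^{N}$ is derived before taking the maximum with zero). Applying the lower bound $\delta^{N}\ge \delta'^{N}$ and then taking the maximum with zero on both sides yields the claimed inequality \eqref{Haddad}.

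The main obstacle, such as it is, will be arithmetic care with the coefficient $\tfrac{N-1}{N}$ so that the $I(A:X_m)$ terms combine to exactly $-\tfrac{1}{N}\sum_m I(A:X_m)$; this is the $N$-partite analogue of the cancellation that produced $\delta'^{3}$ in Theorem \ref{theorem1}, and it is the only place where one must resist a sign or indexing slip. The Holevo bound $I(M_m:X_m)\le J_A(\rho_{AX_m})$ is standard (it underwrites the definition of classical correlation as a maximum over measurements), so no separate justification beyond citing the bipartite case used in Theorem \ref{theorem1} is needed.
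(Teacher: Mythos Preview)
Your proposal is correct and matches the paper's intended argument, which simply says to mimic Theorem~\ref{theorem1} starting from \eqref{Haddadi}. Two minor remarks: the quantity you obtain is the $N$-partite analogue of $\delta''^{3}$ rather than $\delta'^{3}$, and your caution that $\max\{0,\delta^{N}\}\ge\max\{0,\delta'^{N}\}$ ``need not hold directly'' is unwarranted, since $\max\{0,\cdot\}$ is nondecreasing and $\delta^{N}\ge\delta'^{N}$ gives that inequality immediately.
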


Now, having these equations and following the same approach presented in \cite{r39}, one can obtain an upper bound for the shareability of QD among different subsystems. Using the tripartite QMA-EUR Eq.\;\eqref{theorem1_2} via the the theorem below: \\

\begin{theorem}
For any tripartite state $\rho_{ABC}$, we have
\begin{equation}\label{theorem2}
    \Delta_1 + \Delta_2 + S(A) \geq D_A(\rho_{AB}) + D_A(\rho_{AC}),
\end{equation}
where
\begin{equation*}
    \Delta_1 = S(X|B) + S(Z|C) - q_{MU} - {1 \over 2} \; \left[ S(A|B) + S(A|C) \right],
\end{equation*}
and
\begin{equation*}
    \Delta_2 = {-1 \over 2} \; \left[ S(A|B) + S(A|C) \right].
\end{equation*}
\end{theorem}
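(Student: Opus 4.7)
The plan is to reduce Theorem~2 to a direct consequence of Theorem~\ref{theorem1} combined with the elementary identity $I(A:B)=S(A)-S(A|B)$. As a preliminary book-keeping step I would add $\Delta_1$ and $\Delta_2$: the two halves of $[S(A|B)+S(A|C)]$ carry opposite signs and cancel, yielding
\[
\Delta_1+\Delta_2 = S(X|B)+S(Z|C)-q_{MU}-S(A|B)-S(A|C),
\]
so that the target estimate becomes
\[
S(X|B)+S(Z|C)-q_{MU}-S(A|B)-S(A|C)+S(A)\;\geq\; D_A(\rho_{AB})+D_A(\rho_{AC}).
\]

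The central step is then to invoke Eq.~\eqref{theorem1_2} together with the trivial bound $\max\{0,\delta''^{3}\}\geq\delta''^{3}$ to replace the left-hand side by its lower bound. After substituting the definition of $\delta''^{3}$, the remaining task collapses to verifying the single scalar inequality
\[
S(A)-\tfrac{1}{2}\bigl[I(A:B)+I(A:C)\bigr]-\tfrac{1}{2}\bigl[S(A|B)+S(A|C)\bigr]\;\geq\; 0.
\]

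To close the argument I would insert $I(A:B)=S(A)-S(A|B)$ and $I(A:C)=S(A)-S(A|C)$ into the above expression; the conditional-entropy contributions cancel pairwise and what is left is $S(A)-S(A)=0$, i.e.\ the inequality is saturated as a genuine equality. All of the slack in Theorem~2 therefore originates from the $\max$ appearing in Theorem~\ref{theorem1}. I do not foresee a real obstacle: the only nontrivial input is Theorem~\ref{theorem1}, after which the proof is pure algebra using the standard definitions of quantum mutual information and conditional von Neumann entropy.
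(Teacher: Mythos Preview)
Your approach is correct and coincides with the paper's own proof: both start from Eq.~\eqref{theorem1_2}, drop the $\max$ via $\max\{0,\delta''^{3}\}\geq\delta''^{3}$, and then use the identity $I(A:B)=S(A)-S(A|B)$ (equivalently the paper's $\tfrac12[I(A:B)+I(A:C)]=S(A)-\tfrac12[S(A|B)+S(A|C)]$) to arrive at the claim; the residual scalar relation is indeed an equality, so all slack comes from Theorem~\ref{theorem1}. One small wording slip: in your book-keeping step the two halves of $[S(A|B)+S(A|C)]$ in $\Delta_1$ and $\Delta_2$ carry the \emph{same} (negative) sign and \emph{add}, not cancel---your displayed expression for $\Delta_1+\Delta_2$ is nonetheless correct.
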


\begin{proof}
From Eq.\;\eqref{theorem1_2}, one arrives at
\begin{equation}\label{proof2}
    S(X|B) + S(Z|C) - q_{MU} - {1 \over 2} \; \left[ S(A|B) + S(A|C) \right]
    + {1 \over 2} \; \left[ I(A:B) + I(A:C) \right]  \geq  D_A(\rho_{AB}) + D_A(\rho_{AC}).
\end{equation}
Utilizing the following relation
\begin{equation}
   { 1 \over 2} \; \left[ I(A:B) + I(A:C) \right] = S(A) - {1 \over 2} \;
   \left[ S(A|B) + S(A|C) \right] ,
\end{equation}
in Eq.\eqref{proof2}, one comes to
\begin{eqnarray}
    S(X|B) + S(Z|C) - q_{MU}  - {1 \over 2} \; \left[ S(A|B) + S(A|C) \right] + S(A)
    - {1 \over 2} \; \left[ S(A|B) + S(A|C) \right] \geq  D_A(\rho_{AB}) + D_A(\rho_{AC}).
\end{eqnarray}
Therefore, the theorem has been proved.  \\
\end{proof}

As can be seen, our relation, Eq\;\eqref{theorem2}, contains three terms in which $S(A)$ indicates the entropy of the subsystem $A$, $\Delta_1$ is related to the tripartite QMA-EUR, and $\Delta_2$ that is related to the strong subadditivity (SSA) inequality. Note that from the tripartite EUR there is $\Delta_1 \geq 0$, and from the SSA inequality one could arrive at $\Delta_2 \leq 0$. \\

\begin{corollary}
From the above theorem, it could be resulted that for any tripartite state $\rho_{ABC}$ with
$S(A) = - S(A|BC)$, we have
\begin{equation}\label{theofortri}
    D_A(\rho_{AB}) + D_A(\rho_{AC}) \leq D_A(\rho_{A:BC}) + \Delta_1 + \Delta_2.
\end{equation}
\end{corollary}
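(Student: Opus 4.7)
The plan is to deduce the corollary as a direct consequence of the preceding theorem, once I show that the hypothesis $S(A)=-S(A|BC)$ collapses the bipartite discord across the cut $A\colon BC$ to the marginal entropy $S(A)$. Given that identity, the corollary is an immediate substitution into the bound $\Delta_1+\Delta_2+S(A)\geq D_A(\rho_{AB})+D_A(\rho_{AC})$ supplied by the theorem.

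First I would rewrite the hypothesis as $S(BC)-S(ABC)=S(A)$, which at once gives $I(A:BC)=S(A)+S(BC)-S(ABC)=2S(A)$. Since $D_A(\rho_{A:BC})=I(A:BC)-J_A(\rho_{A:BC})$, the whole task reduces to establishing the identity $J_A(\rho_{A:BC})=S(A)$ (or, for the weaker form of the claim, just $J_A(\rho_{A:BC})\leq S(A)$).

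To evaluate $J_A(\rho_{A:BC})$, I would introduce a purification $|\psi\rangle_{ABCE}$ of $\rho_{ABC}$ on an auxiliary system $E$. Purity yields $S(E)=S(ABC)$ and $S(AE)=S(BC)$, so the hypothesis rewrites as $S(A)+S(E)=S(AE)$, i.e.\ $I(A:E)=0$, which forces the factorisation $\rho_{AE}=\rho_A\otimes\rho_E$. For any rank-one projective measurement $\{|e_i\rangle\langle e_i|\}$ on $A$, the post-measurement state on $BCE$ is pure, so $S(\rho_{BC|i})=S(\rho_{E|i})$; the product structure of $\rho_{AE}$ then forces $\rho_{E|i}=\rho_E$ for every outcome $i$, giving $\sum_i p_iS(\rho_{BC|i})=S(\rho_E)=S(ABC)$. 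A refinement argument (coarse-graining only increases $\sum_i p_iS(\rho_{BC|i})$ by concavity of the von Neumann entropy) shows this value is already the infimum over projective measurements on $A$, hence $J_A(\rho_{A:BC})=S(BC)-S(ABC)=S(A)$.

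Substituting $D_A(\rho_{A:BC})=2S(A)-J_A(\rho_{A:BC})=S(A)$ into the theorem yields the claimed inequality $D_A(\rho_{AB})+D_A(\rho_{AC})\leq D_A(\rho_{A:BC})+\Delta_1+\Delta_2$. The main technical point is recognising that the entropic hypothesis is equivalent to the factorisation $\rho_{AE}=\rho_A\otimes\rho_E$ in a purification; once that link is made, the evaluation of $J_A(\rho_{A:BC})$ is essentially mechanical, and the corollary falls out by a one-line substitution.
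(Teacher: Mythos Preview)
Your argument is correct and reaches the same key identity as the paper, namely $D_A(\rho_{A:BC})=S(\rho_A)$ under the hypothesis $S(A)=-S(A|BC)$, after which the corollary is indeed a one-line substitution into the theorem. However, the route you take differs from the paper's. The paper invokes a structural characterisation (due to Xi \emph{et al.}) of the saturation condition $S(A)=-S(A|BC)$: it holds if and only if $\mathcal{H}_{BC}$ factorises as $\mathcal{H}_{(BC)^L}\otimes\mathcal{H}_{(BC)^R}$ with $\rho_{ABC}=|\psi\rangle_{A(BC)^L}\langle\psi|\otimes\rho_{(BC)^R}$, and then reads off $D_A(\rho_{A:BC})=D_A(|\psi\rangle_{A(BC)^L})=S(\rho_A)$ from the pure-state formula for discord. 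You instead purify to $|\psi\rangle_{ABCE}$, translate the hypothesis into $I(A:E)=0$ so that $\rho_{AE}=\rho_A\otimes\rho_E$, and compute $J_A(\rho_{A:BC})$ directly by showing that every rank-one measurement on $A$ yields post-measurement conditional entropy $S(\rho_E)=S(ABC)$. Your approach is more self-contained (no external structural theorem is cited) and treats the pure and mixed cases uniformly; the paper's approach, on the other hand, makes the class of admissible mixed states explicit, which is informative in its own right. Both are essentially equivalent reformulations of the Araki--Lieb saturation condition, viewed either on $BC$ or on the purifying system $E$.
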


\begin{proof}
The outline of the proof is similar to what we have in \cite{r39}. Due to the fact that $D_A(\rho_{A:BC}) = S(\rho_A)$ and the
relation $S(\rho_A) = -S(A|BC)$, it is realized that  Eq.\;\eqref{theofortri} is valid for all
tripartite pure state. As stated in \cite{Xi}, under a specific condition, the
relation $S(\rho_A) = -S(A|BC)$ is reliable even for a mixed state $\rho_{ABC}$. The relation
is true for a mixed state if and only if for the Hilbert space $\mathcal{H}_{BC}$ we have a
factorization $\mathcal{H}_{BC} = \mathcal{H}_{(BC)^L} \otimes \mathcal{H}_{(BC)^R}$ in which
$\rho_{ABC} = |\psi \rangle _{A(BC)^L} \langle \psi | \otimes \rho_{(BC)^R}$. For this case, it is obtained that $D_A(\rho_{A:BC}) = D_A(|\psi \rangle _{A(BC)^L}) = S(\rho_A)$.  \\
\end{proof}

Similar to what we have in \cite{r39}, our obtained inequality \eqref{theofortri} is also realized to be a released version of the monogamy relation of QD \cite{Streltsov,Giorgi,Prabhu,Braga}, and can be utilized for all tripartite pure and extended  mixed tripartite states. \\

It should be implied that, since $S(A|B) + S(A|C) = 0$, the relation \eqref{theofortri} for the pure state ($|\psi \rangle_{ABC}$) leads to
\begin{equation}\label{theoforpure}
    D_A(\rho_{AB}) + D_A(\rho_{AC}) \leq D_A(\rho_{A:BC}) + S(X|B) + S(Z|C) - q_{MU} .
\end{equation}
Due to the fact that the following two relations
\begin{equation*}
    S(X|B) + S(Z|B) \geq q_{MU} + S(A|B) ,
\end{equation*}
and
\begin{equation*}
  S(X|B) + S(Z|C) \geq q_{MU} ,
\end{equation*}
are equivalent \cite{Berta}, it could be concluded that the upper bound of our relation \eqref{theoforpure} is equivalent to the bound of Eq.\;\eqref{Hu11} obtained by Hu and Fan \cite{r39}. However, we believe that our upper bound for mixed tripartite state $\rho_{ABC}$ with $S(A) = -S(A|BC)$ is tighter than the upper bound found by Hu and Fan \cite{r39}. This is due to the fact that in our result $\Delta_1 \leq \delta_T$ and $\Delta_2 \leq 0$. \\
We believe that our upper bound will be useful in many areas of quantum information. One of the consequences of our inequality \eqref{theofortri} is that, if for tripartite pure state $| \psi \rangle_{ABC}$ one finds two observables $X$ and $Z$ that saturate $S(X|B) + S(Z|C) \geq q_{MU}$, then it could be stated that we have the sufficient condition for the monogamy QD. The generalized Greenberger-Horne-Zeilinger (GHZ) state could be implied as one of the examples of the situation. \\
\subsection{Examples}
To clarify the above mentioned results, four examples are considered. In these examples, the observables that are measured on the part $A$ of quantum states are assumed to be the Pauli matrices $X = \sigma_{1}$ and $Z = \sigma_{3}$. \\

\subsubsection{Generalized GHZ state}\label{GHZ}
First, let us consider the  generalized GHZ states which have the form
\begin{equation}\label{GHZ}
\vert GGHZ\rangle = cos\beta \vert 000\rangle +sin\beta \vert 111\rangle,
\end{equation}
where $\beta\in\left[  0,2\pi\right)$. In Fig. \ref{fig1}, different upper bounds of the shareability of QD for these states are plotted versus the parameter $\beta$. As it was expected, the obtained upper bound coincides with Hu and Fan upper bound.
\begin{figure}[ht]
\centering
\includegraphics[width=8cm]{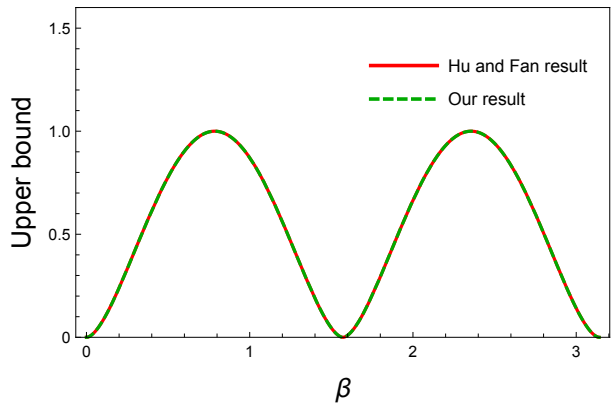}
\caption{\label{fig1} (Color online) Different upper bounds on the shareability of QD for the state in Eq.\;(\ref{GHZ}), versus the state's parameter $\beta$.}
\end{figure}
\begin{figure}[ht]
\centering
\includegraphics[width=8cm]{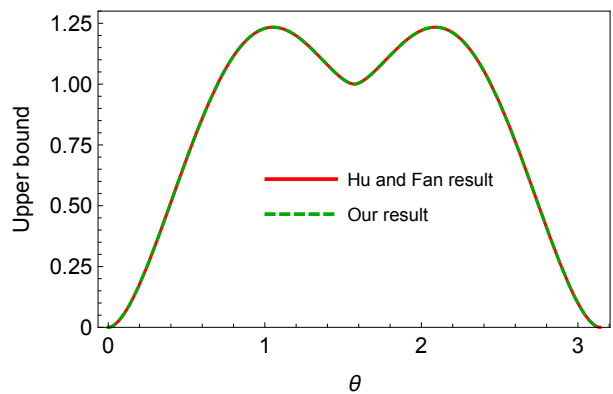}
\caption{(Color online) Different upper bounds on the shareability of QD for the state in Eq.\;(\ref{W}), versus the state's parameter $\theta$, where $\phi=\pi/4$.}\label{fig2}
\end{figure}

\subsubsection{Generalized W state}

As the second example, consider the following generalized $W$ state:
\begin{equation}\label{W}
\vert GW \rangle = sin\theta cos\phi \vert 100\rangle +sin\theta sin\phi \vert 010\rangle +cos\theta \vert 001\rangle ,
\end{equation}
where $\theta\in\left[  0,\pi\right]$ and $\phi\in\left[  0,2\pi\right)$. Same as the previous case, it is realized that for this state, the obtained upper bound is exactly the same as that of Hu and Fan; shown in Fig. \ref{fig2}.
\begin{figure}[ht]
\centering
\includegraphics[width=8cm]{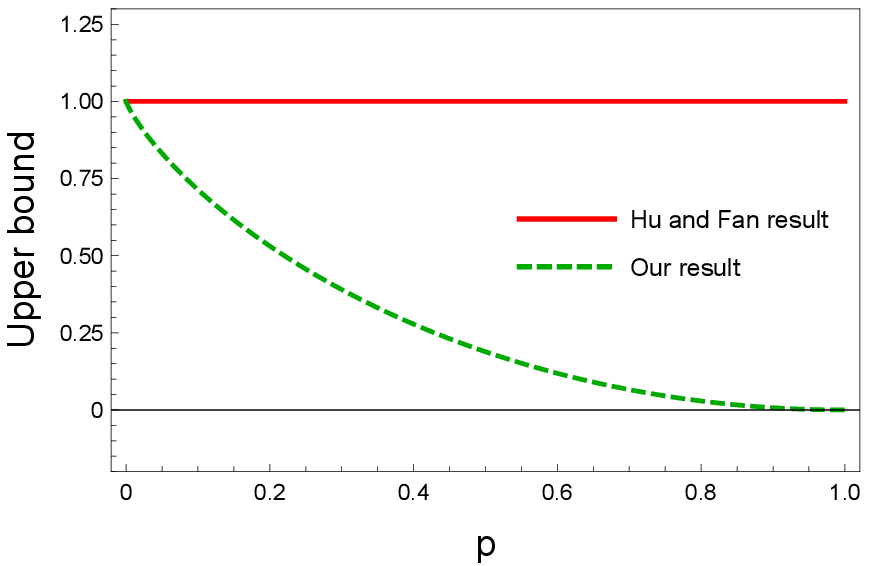}
\caption{(Color online) Different upper bounds on the shareability of QD for the state in Eq.\;(\ref{werner}), versus the state's parameter $p$, where $ 0 \leq p \leq 1$.}\label{fig3}
\end{figure}
\begin{figure}[ht]
\centering
\includegraphics[width=8cm]{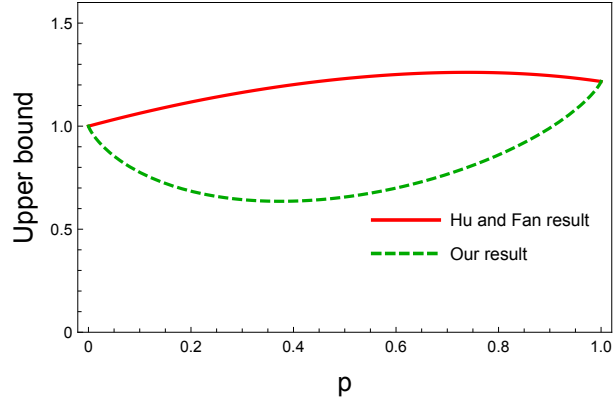}
\caption{(Color online) Different upper bounds on the shareability of QD for the state in in Eq.\;(\ref{Mix}), versus the state's parameter $p$, where $ 0 \leq p \leq 1$.}\label{fig4}
\end{figure}
\subsubsection{ Werner-GHZ state}
As an another example, we consider Werner-GHZ state defined as
\begin{equation}\label{werner}
\rho_{w}=(1-p) \vert GHZ \rangle \langle GHZ \vert + \frac{p}{8}\mathbf{I}_{ABC},
\end{equation}
where $\vert GHZ \rangle = 1/\sqrt{2}(\vert 000 \rangle + \vert 111 \rangle)$ is the GHZ state, and $0 \leq p \leq 1$. In Fig. \ref{fig3}, the upper bounds of the shareability of QD for this state are plotted versus the parameter $p$. As can be seen,  Hu and Fan upper bound is constant as a function of the parameter $p$, whereas our upper bound is tighter and also it decreases by enhancement of $p$.

\subsubsection{A mixed three-qubit state}
As the last example, let us consider a state of this form
\begin{equation}\label{Mix}
\rho=p\vert GHZ \rangle \langle GHZ \vert + (1-p)\vert W \rangle \langle W \vert,
\end{equation}
where $ 0 \leq p \leq 1$ is a real number and the usual $\vert W \rangle$ state is defined as
\begin{eqnarray}
\vert W \rangle &=& \frac{1}{\sqrt{3}}(\vert 001 \rangle + \vert 010 \rangle + \vert 100 \rangle). \nonumber
\end{eqnarray}
 In Fig. \ref{fig4}, the upper bounds of the shareability of QD for the state in Eq.\;(\ref{Mix}) are plotted versus the parameter $p$. It is clearly shown that, our upper bound is tighter than that of Hu and Fan.

\subsection{Generalization}
The obtained result could be generalized to $(N+1)$-partite states. By utilizing the multipartite uncertainty relation with quantum memory, it is possible to find an upper bound for the shareability of multipartite QD. This will be presented in the following theorem. \\

\begin{theorem}
For any $N+1$-partite state, we have
\begin{equation}\label{theorem3}
\Delta_1^N + \Delta_2^N + S(A) \geq \sum_{i=1}^{N} D_A(\rho_{AX_i}),
\end{equation}
in which
\begin{eqnarray}
\Delta_1^N & = & \sum_{i=1}^{N} S(M_i|X_i) + \log_2(b) - {N-1 \over N} \sum_{i=1}^{N} S(A|X_i), \\
\Delta_2^N & = & {-1 \over N} \sum_{i=1}^{N} S(A|X_i).
\end{eqnarray}
\end{theorem}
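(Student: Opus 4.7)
The plan is to mirror the proof of the tripartite version (the previous theorem) but apply the multipartite QMA-EUR from the earlier Corollary, namely
\[
\sum_{i=1}^{N} S(M_i|X_i) \geq -\log_2(b) + \frac{N-1}{N}\sum_{i=1}^{N} S(A|X_i) + \max\{0,\delta'^{N}\},
\]
with $\delta'^{N} = \sum_{i=1}^{N} D_A(\rho_{AX_i}) - \tfrac{1}{N}\sum_{i=1}^{N} I(A:X_i)$. Since $\max\{0,\delta'^{N}\} \geq \delta'^{N}$, I would drop the maximum and move the $-\log_2(b)$ and the $\tfrac{N-1}{N}\sum S(A|X_i)$ terms to the left, producing precisely $\Delta_1^N$ on the left-hand side and $\sum_{i} D_A(\rho_{AX_i}) - \tfrac{1}{N}\sum_{i} I(A:X_i)$ on the right.

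The next step is to eliminate the mutual-information sum in favor of $S(A)$ and $\Delta_2^N$. Using the identity $I(A:X_i) = S(\rho_A) + S(\rho_{X_i}) - S(\rho_{AX_i}) = S(A) - S(A|X_i)$, averaging over $i$ gives
\[
\frac{1}{N}\sum_{i=1}^{N} I(A:X_i) = S(A) - \frac{1}{N}\sum_{i=1}^{N} S(A|X_i) = S(A) + \Delta_2^N.
\]
Substituting this into the inequality obtained in the first step yields
\[
\Delta_1^N \geq \sum_{i=1}^{N} D_A(\rho_{AX_i}) - S(A) - \Delta_2^N,
\]
which is exactly the claim after rearrangement.

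I do not foresee a serious obstacle: the argument is a direct generalization of the tripartite case, and every manipulation is algebraic. The one piece that requires a bit of care is the bookkeeping of the coefficients $\tfrac{N-1}{N}$ and $\tfrac{1}{N}$ so that the terms recombine cleanly into $\Delta_1^N$ and $\Delta_2^N$ as defined; the identity $\tfrac{N-1}{N} + \tfrac{1}{N} = 1$ is what makes the $S(A|X_i)$ contributions collapse correctly. No new inequality beyond the multipartite QMA-EUR and the definition of the mutual information is needed, so the result follows once these two ingredients are combined.
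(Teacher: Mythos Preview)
Your proposal is correct and follows essentially the same route as the paper: both start from the multipartite QMA-EUR of the Corollary, drop the $\max\{0,\cdot\}$ in favor of $\delta'^{N}$ itself, and then use the identity $\tfrac{1}{N}\sum_i I(A:X_i)=S(A)-\tfrac{1}{N}\sum_i S(A|X_i)$ to trade the mutual-information sum for $S(A)+\Delta_2^N$. The only difference is cosmetic ordering of the algebra.
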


\begin{proof}
Regarding the Eq.\;\eqref{Haddad}, one has
\begin{equation}
    \sum_{i=1}^{N} S(M_i|X_i) + \log(b) - {N-1 \over N} \sum_{i=1}^{N} S(A|X_i)
    + {1 \over N} \sum_{i=1}^{N} I(A:X_i) \geq \sum_{i=1}^{N} D_A(\rho_{AX_i}) ,
\end{equation}
Applying the relation below
\begin{equation}
S(A) = {1 \over N} \; \left[ \sum_{i=1}^{N} S(A|X_i) + \sum_{i=1}^{N} I(A:X_i) \right] ,
\end{equation}
one comes to
\begin{equation}
\sum_{i=1}^{N} S(M_i|X_i) + \log_2(b) - {N-1 \over N} \sum_{i=1}^{N} S(A|X_i) +
 S(A) - {1 \over N} \sum_{i=1}^{N} S(A|X_i) \geq \sum_{i=1}^{N} D_A(\rho_{AX_i}) .
\end{equation}
The above equation could be rewritten as
\begin{equation}
\Delta_1^N + \Delta_2^N + S(A) \geq \sum_{i=1}^{N} D_A(\rho_{AX_i}).
\end{equation}
\end{proof}

Now, let us consider the above result for a four-partite state, i.e. $N=3$. For this case, Eq.\;\eqref{theorem3} is rewritten as
\begin{equation}
\Delta_1^3 + \Delta_2^3 + S(A) \geq D_A(\rho_{AB}) + D_A(\rho_{AC}) + D_A(\rho_{AD}),
\end{equation}
where the quantity $\Delta_1^3$ is given by
\begin{equation}
\Delta_1^3 = S(M_1|B) + S(M_2|C) + S(M_3|D) + \log_2(b') -
{2 \over 3} \; \left[ S(A|B) + S(A|C) + S(A|D) \right] ,
\end{equation}
and
\begin{equation}
b' = max_k \left\{ \sum_j max_i [|\langle u_i^1 | u_j^2 \rangle|^2] \; |\langle u_j^3 | u_k^3 \rangle|^2 \right\} ,
\end{equation}
in which $|u_i^1 \rangle$, $| u_j^2 \rangle$, and $| u_k^3 \rangle$ are the eigenstates of the three observables $M_1$, $M_2$, and $M_3$, respectively. The other quantity $\Delta_2^3$ is read as
\begin{equation}
\Delta_2^3 = {-1 \over 3} \; \Big[ S(A|B) + S(A|C) + S(A|D) \Big].
\end{equation}
Assume there is a four-partite state $\rho_{ABCD}$, where the particles A, B, C, and D are respectively sent to Alice, Bob, Charlie, and David. Alice carries only one of the three observables $M_m$ (where $m=1,2,3$) and informs the other about her choice of measurement. Then, Alice makes a measurement. Now, if Alice measures $M_1$, it is Bob's task to minimize his uncertainty about $M_1$. If $M_2$ is measured by Alice, the task is Charlie's to minimize his uncertainty about $M_2$. And for the last case, if $M_3$ is measured by Alice, it is David's task to minimize his uncertainty about $M_3$.

\section{Conclusion}
The tripartite QMA-EUR has many applications in quantum information theory; quantum key distribution could be addressed as one of these applications. Here, we presented another application of tripartite QMA-EUR. It was determined that using tripartite QMA-EUR, one could obtain an upper bound for the shareability of QD. Our bound includes three terms in which one is related to the entropy of the subsystem that is being measured. The second term is related to the tripartite QMA-EUR. And the third term implies the SSA inequality. In another word, our bound relates tripartite QMA-EUR, SSA inequality, and QD which are known as three important features of quantum information. Also, it was shown that for some tripartite states, this relation is converted to the monogamy of discord. The obtained bound could be applicable in the field of quantum information. The result indicates that for a tripartite pure state if one can find two observables $X$ and $Z$ that saturate $S(X|B) + S(Z|C) \geq q_{MU}$, then, a sufficient condition for the monogamy of QD is provided.\\
Moreover, the work could be generalized so that one could obtain an upper bound for the shareability of QD for multipartite states.

\nocite{*}

\end{document}